\newcommand{\dw}{\delta_{\text{worst}}}
\newcommand{\da}{\delta_{\text{avg}}}
\newcommand{\dista}{d_{\text{avg}}}
\newcommand{\diam}{D}
\renewcommand{\O}{\mathcal{O}}
\newcommand{\datasetsize}{93}
\newcommand{\datasetbio}{19}
\newcommand{\datasetsoc}{32}
\newcommand{\datasettec}{42}
\renewcommand{\P}{\mathbb{P}}
\newtheorem{proposition}{Proposition}
\begin{document}
\title{Hyperbolicity Measures ``Democracy'' in Real-World Networks}
\thanks{\small{Correspondence and requests for materials should be addressed to M.B. (\emph{michele.borassi@imtlucca.it})}}%

\author{Michele Borassi}
\affiliation{IMT Institute for Advanced Studies, Piazza San Francesco 19, 55100 Lucca, Italy}
\author{Alessandro Chessa}
\affiliation{IMT Institute for Advanced Studies, Piazza San Francesco 19, 55100 Lucca, Italy}
\affiliation{Linkalab, Complex Systems Computational Laboratory, 09129 Cagliari, Italy}

\author{Guido Caldarelli}
\affiliation{IMT Institute for Advanced Studies, Piazza San Francesco 19, 55100 Lucca, Italy}
\affiliation{Istituto dei Sistemi Complessi (ISC), via dei Taurini 19, 00185 Roma, Italy}
\affiliation{London Institute for Mathematical Sciences, 35a South St. Mayfair W1K 2XF London UK}

\vspace{0.5cm}



\begin{abstract}
We analyze the hyperbolicity of real-world networks, a geometric quantity that measures if a space is negatively curved. In our interpretation, a network with small hyperbolicity  is ``aristocratic'', because it contains a small set of vertices involved in many shortest paths, so that few elements ``connect'' the systems, while a network with large hyperbolicity has a more ``democratic'' structure with a larger number of crucial elements.

We prove mathematically the soundness of this interpretation, and we derive its consequences by analyzing a large dataset of real-world networks. We confirm and improve previous results on hyperbolicity, and we analyze them in the light of our interpretation.

Moreover, we study (for the first time in our knowledge) the hyperbolicity of the neighborhood of a given vertex. 
This allows to define an ``influence area'' for the vertices in the graph. We show that the influence area of the highest degree vertex is small in what we define ``local'' networks, like most social or peer-to-peer networks. On the other hand, if the network is built in order to reach a ``global'' goal, as in metabolic networks or autonomous system networks, the influence area is much larger, and it can contain up to half the vertices in the graph. 
In conclusion, our newly introduced approach allows to distinguish the topology and the structure of various complex networks.

\end{abstract}
\maketitle

\section{Introduction}
The most basic way to describe a graph is to consider its metric quantities 
as for instance the diameter \cite{bollobas2004diameter}, 
the degrees, the distances \cite{bollobas1979graph}. 
In case no other information is available, a good choice is to consider randomly drawn edges \cite{erdos1959random,erdos1960evolution} and determine the expected values and distribution of 
those properties in such random graphs. More recently, the computer revolution and the pervasive presence of Internet and WWW, created a whole series of
complex networks in technological systems whose properties can be directly measured from data \cite{albert2002statistical}. All the real networks show particular structures of edges, making them definitely different from random graphs. Driven by such evidence, researchers  recognized analogous structures in a variety of other cases, ranging from biology to economics and finance \cite{caldarelli2007scale}.
All these structures show lack of characteristic scale in the statistical distribution of the degree and small world effect, making therefore important to understand the basic principles at the basis of their formation \cite{barabasi1999emergence,caldarelli2002scale,krioukov2010hyperbolic}. 

To bring order in this huge set of systems, it would be extremely useful if we could classify the various networks by means of some specific quantity differing from case to case. 
In this quest of distinguishing universal from particular behavior we decide to  consider the connection with the ``curvature'' of the graph. 
Embedding spaces can have  negative curvature (hyperbolic spaces), zero value of curvature (Euclidean spaces), or positive curvature (spherical spaces).
On the bases of the hyperbolicity measure \cite{gromov1987essays}, 
it is possible to extend such a  measure of curvature for manifolds to discrete networks.   
Hyperbolicity measure \cite{gromov1987essays} defines the curvature for an infinite metric space with bounded 
local geometry, using a 4 points condition.
In detail, the hyperbolicity $\delta(x,y,v,w)$ of a $4$-tuple of vertices $\{x,y,v,w\}$ is defined as half the difference between the biggest two of the following sums:
\begin{equation} \label{eq:sums}
d(x,y)+d(v,w),\ \ \  d(x,v)+d(y,w), \ \ \ d(x,w)+d(y,v)
\end{equation}
where $d$ denotes the distance between two vertices. If the lengths of edges are integers (in our case, the length of each edge is assumed to be 1), the hyperbolicity of a $4$-tuple of vertices is always an integer or its half.
The hyperbolicity $\delta(G)$ of a graph $G$ is commonly defined as the maximum of the hyperbolicity of a $4$-tuple of vertices \cite{bermudo2011computing, bermudo2011hyperbolicity, Carballosa2013, cohen2013exact, kennedy2013hyperbolicity}. However, for the purposes of this work, also the average hyperbolicity of a $4$-tuple will have a significant role: to distinguish the two, we will use $\dw$ and $\da$, respectively \cite{albert2014topological}.

This approach attracted the interest of the community, both in modeling this phenomenon \cite{shang2012lack}, and in classifying networks from the real world \cite{abu-ata2015metric}. For example, it has been argued that several properties of complex networks arise naturally, once a negative curvature of the space has been assumed \cite{krioukov2010hyperbolic}. Similarly, others investigated the role of hyperbolicity in a series of different networks \cite{albert2014topological} ranging from social networks in dolphins, to characters in books, with the aim of discovering essential edges in path of communication. In addition, by studying structural holes, it has been shown that most of these networks are essentially tree-like \cite{albert2014topological}. 

Nevertheless, nobody provided an interpretation of what is measured by hyperbolicity, and in the literature there are few applications of this quantity. 
This paper wants to fill this gap, by considering {\em hyperbolicity as a measure of how much a network is ``democratic''} (the larger the hyperbolicity value the larger the ``democracy'' in the network). 
Indeed, we prove mathematical results that link a small hyperbolicity constant with the existence of a small set of vertices ``controlling'' many paths, and hence with a non-democratic network (implications are true in both directions). 

As far as we know, this is the first measure of democracy in a complex network, apart from assortativity \cite{newman2002assortative,mixing2003newman}. In any case, our measure is quite different from the latter one, because it is based on shortest paths and not on neighbors: consequently, the new measure is global. Moreover, it is more robust: for instance, if we ``break'' all edges by ``adding a vertex in the middle'', the hyperbolicity of the graph does not change much, but the assortativity decreases drastically.

Starting from this interpretation, we derive consequences on the structure of biological, social, and technological networks, by computing $\dw$ and $\da$ on a dataset made by \datasetsize\ complex networks. 
This analysis confirms previous results showing that $\dw$ is highly influenced by ``random events'', and it does not capture a specific characteristic of the network \cite{albert2014topological}. Differently from previous works, we will also be able to quantify this phenomenon, showing that the distribution of $\frac{2\dw}{D}$ (where $D$ is the diameter of the graph) is approximately normal. The value of $\da$ is instead much more robust with respect to random events, and it allows us to effectively distinguish networks of different kind. Our classification will be different from the classification provided by assortativity \cite{newman2002assortative,mixing2003newman}: for instance, a network with few influential hubs not connected to each other is democratic if we consider assortativity, while it is aristocratic in our framework.

Finally, we introduce the hyperbolicity of vertex neighbors. This is done both starting from high-degree vertices and from random vertices. Our goal is to describe the ``influence area'' of such vertices. If a neighbor is not democratic, it means that this neighbor is in the influence area of the center (this implication is particularly evident when the center is a high-degree vertex).
If the starting vertex $v$ is random, the hyperbolicity of a neighbor grows quite irregularly, with local maxima when ``more influential'' vertices are reached, and local minima when the neighbor size corresponds to the influence area of such vertices. On the other hand, if the starting vertex is already influential, the plot grows almost linearly until the hyperbolicity of the whole graph is reached, and then it is constant. We have chosen to define the influence area of this vertex as the first neighbor where half this threshold is reached.

The size of the influence area of the vertex with highest degree provides a good way to classify networks. We are able to distinguish ``local'' networks, where each node creates connections to other nodes in order to reach their own goals, and ``global'' networks, where a common goal drives the creation of the network. For instance, peer-to-peer networks are local, because each node creates connections in order to download data, while metabolic networks are global, because the creation of links is driven by the global goal of making the cell alive. We show that influence areas in a local networks is quite small, containing about $\frac{n}{10}$ vertices, where $n$ is the total number of vertices. Conversely, influence areas in global networks have a much bigger size, close to $\frac{n}{3}$ vertices.




\section{Methods}

The main result of this paper is to show the link between the hyperbolicity of a graph and its  ``democracy''. This interpretation is motivated by the following propositions: the first one shows that, if for some vertices $v,w$, $\max_{x,y} \delta(x,y,v,w)$ is not high, then there is a set of small diameter that ``controls'' all approximately shortest paths from $x$ to $y$. Consequently, a network with low hyperbolicity is not ``democratic'', because shortest paths are controlled by small sets.
\begin{proposition}
Let $v,w$ be two vertices in a network $G=(V,E)$, let $B_r(v)$ be the $r$-neighborhood of $v$ (that is, the set $\{u \in V:d(u,v) \leq r\}$), and let $B_s(w)$ be the $s$-neighborhood of $w$. Then, the diameter of the set $X=B_r(v) \cap B_s(w)$ is at most $2\max_{x,y} \delta(x,y,v,w)+r+s-d(v,w)$.
\end{proposition}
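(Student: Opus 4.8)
The plan is to fix an arbitrary pair $x,y \in X$, bound $d(x,y)$ in terms of $\delta(x,y,v,w)$, and then take the maximum over such pairs to recover the diameter of $X$. The first step is to record the constraints coming from membership in $X = B_r(v) \cap B_s(w)$: since $x,y \in B_r(v)$ we have $d(x,v) \leq r$ and $d(y,v) \leq r$, and since $x,y \in B_s(w)$ we have $d(x,w) \leq s$ and $d(y,w) \leq s$.

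Next I would write out the three sums appearing in the definition of hyperbolicity, Eq.~\eqref{eq:sums}, for the $4$-tuple $\{x,y,v,w\}$, namely
\begin{equation}
S_1 = d(x,y)+d(v,w), \quad S_2 = d(x,v)+d(y,w), \quad S_3 = d(x,w)+d(y,v).
\end{equation}
The key observation is that the quantity I want to control, $d(x,y)$, appears only in $S_1$, while the two remaining sums are bounded by the ball radii: combining the constraints above gives $S_2 \leq r+s$ and $S_3 \leq r+s$.

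The heart of the argument is a short case analysis on whether $S_1$ is the largest of the three sums. If $S_1$ is (tied for) the largest, then by definition the second largest sum is $\max(S_2,S_3)$, so that $\delta(x,y,v,w) = \tfrac12\bigl(S_1 - \max(S_2,S_3)\bigr)$, whence $S_1 = 2\delta(x,y,v,w) + \max(S_2,S_3) \leq 2\delta(x,y,v,w) + r + s$; subtracting $d(v,w)$ yields $d(x,y) \leq 2\delta(x,y,v,w) + r + s - d(v,w)$. If instead $S_1$ is not the largest, then $S_1 \leq \max(S_2,S_3) \leq r+s$, so $d(x,y) \leq r+s-d(v,w)$, which is even smaller since $\delta(x,y,v,w) \geq 0$. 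In both cases the same inequality holds.

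Finally I would bound $\delta(x,y,v,w) \leq \max_{x,y} \delta(x,y,v,w)$ and take the maximum over all $x,y \in X$; since the diameter of $X$ equals $\max_{x,y \in X} d(x,y)$, this gives exactly the claimed bound. I do not expect any serious obstacle here: the only delicate point is noticing that $d(x,y)$ sits in a single one of the three sums while the other two are pinned down by the radii $r,s$, and then disposing of the case where that sum fails to be maximal, which is precisely where the nonnegativity of $\delta$ is used.
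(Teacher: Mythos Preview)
Your proof is correct and follows essentially the same route as the paper's: both fix $x,y \in X$, bound $d(x,v)+d(y,w)$ and $d(x,w)+d(y,v)$ by $r+s$, extract the inequality $d(x,y) \leq 2\delta(x,y,v,w) + r + s - d(v,w)$, and then maximize over $x,y \in X$. The only cosmetic difference is that the paper compresses your case split into the single inequality $2\delta(x,y,v,w) \geq S_1 - \max(S_2,S_3)$, which is exactly what your two cases establish (equality when $S_1$ is maximal, and trivial when $S_1 - \max(S_2,S_3) \leq 0 \leq 2\delta$).
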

\begin{proof}
Let $x,y$ be two vertices in $X$. Then, 
\begin{align*}
2\delta(x,y,v,w) \geq & d(x,y)+d(v,w)-\max(d(x,v)+d(y,w), \\
& d(x,w)+d(y,v)) \geq \\
\geq & d(x,y)+d(v,w)-(r+s).
\end{align*}

Taking the maximum over all possible $x,y \in X$, we obtain $D(X)=\max_{x,y \in X} d(x,y) \leq 2\max_{x,y}\delta(x,y,v,w)+r+s-d(v,w)$, where $D(X)$ is the diameter of the set $X$.
\end{proof}

The second proposition is a sort of converse: if there is a set of vertices controlling the shortest paths of a given $4$-tuple, the hyperbolicity of that $4$-tuple is low. Consequently, if the hyperbolicity is high, then there is not a small set of vertices controlling many shortest paths, and the network is democratic.

\begin{proposition}
Let $x,y,v,w$ be $4$-tuple of vertices, and let us assume that there exists a set $C\subseteq V$ of diameter $D$ such that all shortest paths between $x,y,v,w$ pass through $C$. Then, $\delta(x,y,v,w) \leq D$.
\end{proposition}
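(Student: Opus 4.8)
The plan is to turn the hypothesis into a decomposition of distances through $C$, and then play the three sums in \eqref{eq:sums} off against each other using only the triangle inequality and $\operatorname{diam}(C)\le D$. Recall that $\delta(x,y,v,w)$ is half the gap between the two largest of the three sums, which correspond to the three ways of pairing the four vertices. I will show that, for whichever pairing I single out, the other two sums exceed its own value by at most $2D$; applying this to the pairing that realizes the \emph{second}-largest sum then bounds the largest sum by the second-largest plus $2D$, giving $\delta\le D$.

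First I would fix one pairing, say the one giving $S=d(x,v)+d(y,w)$. By hypothesis every shortest path among the four vertices meets $C$, so I may pick $p\in C$ on a shortest $x$--$v$ path and $q\in C$ on a shortest $y$--$w$ path, giving $d(x,v)=d(x,p)+d(p,v)$ and $d(y,w)=d(y,q)+d(q,w)$. Since $p,q\in C$ and $C$ has diameter $D$, we have $d(p,q)\le D$. Now I route each of the remaining distances through $p$ and $q$: for instance $d(x,y)\le d(x,p)+d(p,q)+d(q,y)$ and $d(v,w)\le d(v,p)+d(p,q)+d(q,w)$, whose sum is at most $[d(x,p)+d(p,v)]+[d(y,q)+d(q,w)]+2d(p,q)=S+2D$. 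The analogous routing controls $d(x,w)+d(y,v)$, so both competing sums are at most $S+2D$. Note that only the two distances $d(x,v),d(y,w)$ of the chosen pairing actually need to be split through $C$; the other four are merely upper-bounded.

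The decisive bookkeeping step is to apply this estimate to the correct pairing. Writing $S_{\max}\ge S_{\mathrm{mid}}\ge S_{\min}$ for the three sums in decreasing order, I would run the argument above with the pairing that realizes $S_{\mathrm{mid}}$ (the full hypothesis is convenient precisely because I do not know in advance which pairing this is). Then $S_{\max}$ is one of its two competing sums, so $S_{\max}\le S_{\mathrm{mid}}+2D$, and hence $\delta(x,y,v,w)=\tfrac12(S_{\max}-S_{\mathrm{mid}})\le D$. If $S_{\max}=S_{\mathrm{mid}}$ then $\delta=0$ and the conclusion is immediate.

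The only genuine obstacle is this final choice of pairing. A triangle-inequality routing centered on any single pairing bounds the two rival sums by that pairing's sum plus $2D$, but this does nothing useful if one happens to center it on the largest sum; the point is to center it on the second-largest. Everything else is routine: repeated use of the triangle inequality together with $\operatorname{diam}(C)\le D$, the one subtlety being that a vertex of $C$ can legitimately be placed on each shortest path invoked, which is exactly what the hypothesis supplies.
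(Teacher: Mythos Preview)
Your argument is correct and is essentially the paper's proof. Both decompose the two distances in the second-largest sum through points of $C$ (you via explicit $p,q$ on those geodesics, the paper via the shorthand $d(\cdot,C)$), and then upper-bound the largest sum by the triangle inequality routed through $C$, yielding $S_{\max}\le S_{\mathrm{mid}}+2D$. The only cosmetic difference is that the paper assumes an ordering of the three sums up front and bounds the difference directly, whereas you first prove the ``any anchored pairing controls the other two'' lemma and then specialize to the middle pairing; the content is the same.
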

\begin{proof}
We may assume without loss of generality that $d(x,y)+d(v,w) \geq d(x,v)+d(y,w) \geq d(x,w)+d(y,v)$. Then, if we denote by $d(x,C):=\min_{c \in C} d(x,c)$,
\begin{align*}
2\delta(x,y,v,w) &= && d(x,y)+d(v,w) - d(x,v)-d(y,w) \\
& \leq && d(x,C)+D+d(C,y)+d(v,C)+D \\
&&& +d(C,w)-d(x,C)-d(C,v) \\
&&& -d(y,C)-d(C,w) \\
&=&&  2D.
\end{align*}
\end{proof}

These two results formalize our connection between the hyperbolicity constant and how democratic a complex network is. This work will confirm this interpretation by analyzing a dataset of \datasetsize\ graphs, made by \datasetbio\ biological networks, \datasetsoc\ social networks, and \datasettec\ technological networks, and it will draw conclusions on which networks are more democratic than others.


The first data check extends some of the activity already done \cite{krioukov2010hyperbolic,albert2014topological}: for each network in our dataset, we  computed the distribution of $\frac{2\dw}{\diam}$, where $D$ is the diameter of the graph (this value is always between $0$ and $1$ \cite{Fang2011}). With respect to the previous papers, we got more data referring to larger networks, and we therefore deal with lower statistical errors. 
This is particularly important, since it is known that $\dw$ does not capture structural properties of a network. Furthermore, its behavior is not robust, 
in the sense that small modifications in the graph can significantly change the value of $\dw$ \cite{albert2014topological}. Moreover, computing $\dw$ is difficult task, since the best ``practical'' algorithm \cite{cohen2013exact} has running time $\O(n^4)$. We have been able to overcome these issues by focusing on $\da$, already considered in the literature \cite{albert2014topological}, but not deeply analyzed.
In particular, for each graph in the dataset, we have considered the ratio $\frac{2\da}{\dista}$, where $\dista$ is the average distance between two randomly chosen vertices \cite{Fang2011} (this parameter takes values in the interval $[0,1]$).
Although the exact computation of $\da$ is also hard, the value $\frac{2\da}{\dista}$ can be easily approximated through sampling. More specifically, if we consider $N$ $4$-tuples of vertices with hyperbolicity $\delta_1,\dots,\delta_N$, and since $a_i=0 \leq \delta_i \leq 1=b_i$ for each $i$, by the Azuma-Hoeffding inequality \cite{hoeffding1963probability} we obtain:
\begin{align*}
\P\left(\left|\frac{\sum_{i=1}^N \delta_i}{N}-\da\right| \geq t \right) &\leq 2e^{\frac{-N^2t^2}{2\sum_{i=1}^N (b_i-a_i)^2}} \leq \\
& \leq 2e^{-\frac{Nt^2}{2}}.
\end{align*}
We choose to sample $N=10,000,000$ $4$-tuples of vertices, obtaining an estimate $\bar{\delta}_{\text{avg}}$. The previous inequality applied with $t=0.001$ yields:

\begin{align*}
\P\left(\left|\bar{\delta}_{\text{avg}}-\da\right| \geq 0.001 \right) &\leq 2e^{-5} \leq 0.7\%.
\end{align*}


This precision is more than sufficient for our purposes. 

Finally, we have analyzed neighbors of a given vertex $v$, approximating their ratio $\frac{2\da}{\dista}$ as before, in order to analyze the influence area of $v$. The size of the neighbor considered ranges from the degree of $v$ to $n$, with steps of $10$ vertices (where $n$ is the total number of vertices in the graph). We have chosen $v$ as the maximum degree vertex, which intuitively should have a big influence area, and we have used as a benchmark of comparison the same results from a random vertex.

We have defined the ``influence area'' of $v$ as the biggest neighbor where $\frac{2\da}{\dista}$ is at most half than the same value in the whole graph. However, in order to avoid ``random deviations'' (especially, when the neighbor is small), in our experiments we have considered the fourth neighbor where this event has occurred. The purpose of this analysis is twofold: not only we define and compute the influence area of a vertex, but we also classify networks according to the size of this influence area.






\section{Results}

\subsection{Worst-Case Hyperbolicity.}

The most common approach is to consider the maximum hyperbolicity of a 4-tuple of vertices, that is, $\dw(G)$. Despite some attempts in proving that real-world networks usually have low hyperbolicity \cite{kennedy2013hyperbolicity}, it soon became clear that the small values obtained are consequences of small world networks \cite{albert2014topological}, since $0 \leq 2\dw \leq D$ (where $D$ is the diameter of the graph). In particular, in a dataset of small social and biological networks \cite{albert2014topological} there is no relation between $2\dw$ and $D$. The ratio between these values varies between $25\%$ and $89\%$.

In this paper, we make a more detailed analysis, working with larger networks, between one hundred and several thousand of vertices. 
The results obtained are shown in Figure \ref{fig:deltaworst}, which also contains specific results dealing only with social, biological and technological networks.

These results show that the distribution of the ratio $\frac{2\dw}{D}$ is approximately Gaussian, both in the whole dataset and in each single kind of network. The average ratio is $0.521$, and the standard deviation is $0.085$. Moreover, a Chi-square goodness of fit test applied to the previous data does not reject the hypothesis that the distribution is Gaussian with mean $0.5$ and variance $0.085$, with a very high confidence level \cite{ross2010statistics}. This result confirms that the hyperbolicity of real-world networks is not much ``smaller than expected'', result already obtained in the past \cite{albert2014topological}. However, we are able to perform a further step: the Gaussian probability distribution makes us think that $\dw$ is influenced by random events. Indeed it does not reflect particular characteristics of the network, since the same distribution arises from networks of different kinds. 

\begin{figure*}
\includegraphics[width=\textwidth]{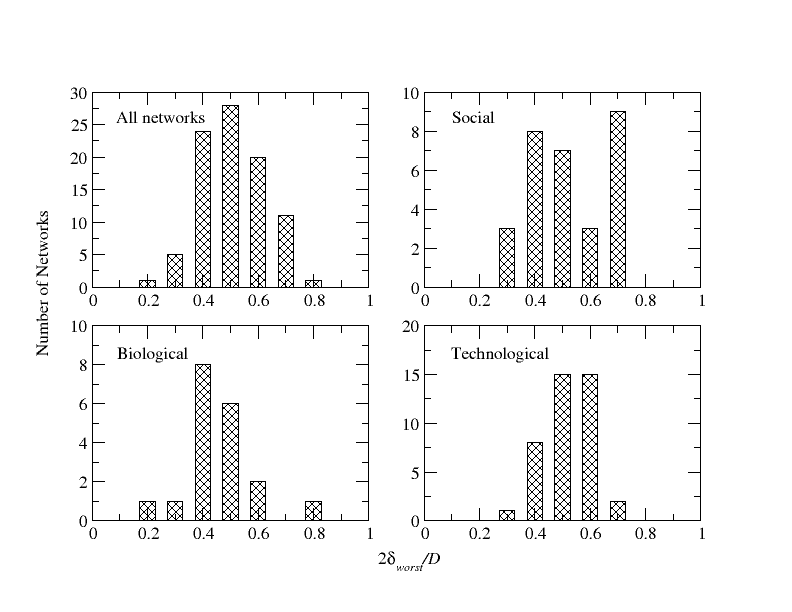}
\caption{The distribution of $\frac{2\dw}{D}$ in the graphs in our dataset. The bar corresponding to the value $p$ contains all networks where $p-0.5 < \frac{2\dw}{D} \leq p+0.5$.\label{fig:deltaworst}} 
\end{figure*}

Social networks show a slightly different behavior, since many of them have a larger value of $\frac{2\dw}{D}$, between $0.65$ and $0.75$. However, this is due to the presence of several financial (e-MID, a platform for interbank lending) networks, where the ratio is often $\frac{2}{3}$ or $\frac{3}{4}$ since the diameter is $3$ or $4$.

Despite this particular case, we may conclude that $\dw$ is not a characteristic of the network, but it mainly depends on ``random events'' that have a deep impact on the value of $\dw$.
This conclusion is further confirmed by the particular case of the e-MID networks: this parameter changed from $0.750$ in 2011 to $0.286$ in 2012, only because a simple path of length $3$ increased the diameter from $4$ to $7$.



\subsection{Average Hyperbolicity.}

In the past, the average hyperbolicity $\da$ of a 4-tuple of vertices was rarely analyzed: the only known result is that it is usually significantly smaller than $\dw$ \cite{albert2014topological}. 
However, we think that this parameter may provide very interesting results, because it is robust, in the sense that it does not change much if few edges of the graph are modified. Furthermore, it is easily approximable through sampling (while computing $\dw$ takes time $\O(n^4)$). 
Similarly to what we have done in the analysis of $\dw$, we have considered the ratio $\frac{2\da}{\dista}$, where $\dista$ denotes the average distance in the network (also this parameter lies in the interval $[0,1]$).
The results obtained are plotted in Figure \ref{fig:deltaavg}.

\begin{figure*}
\includegraphics[width=\textwidth]{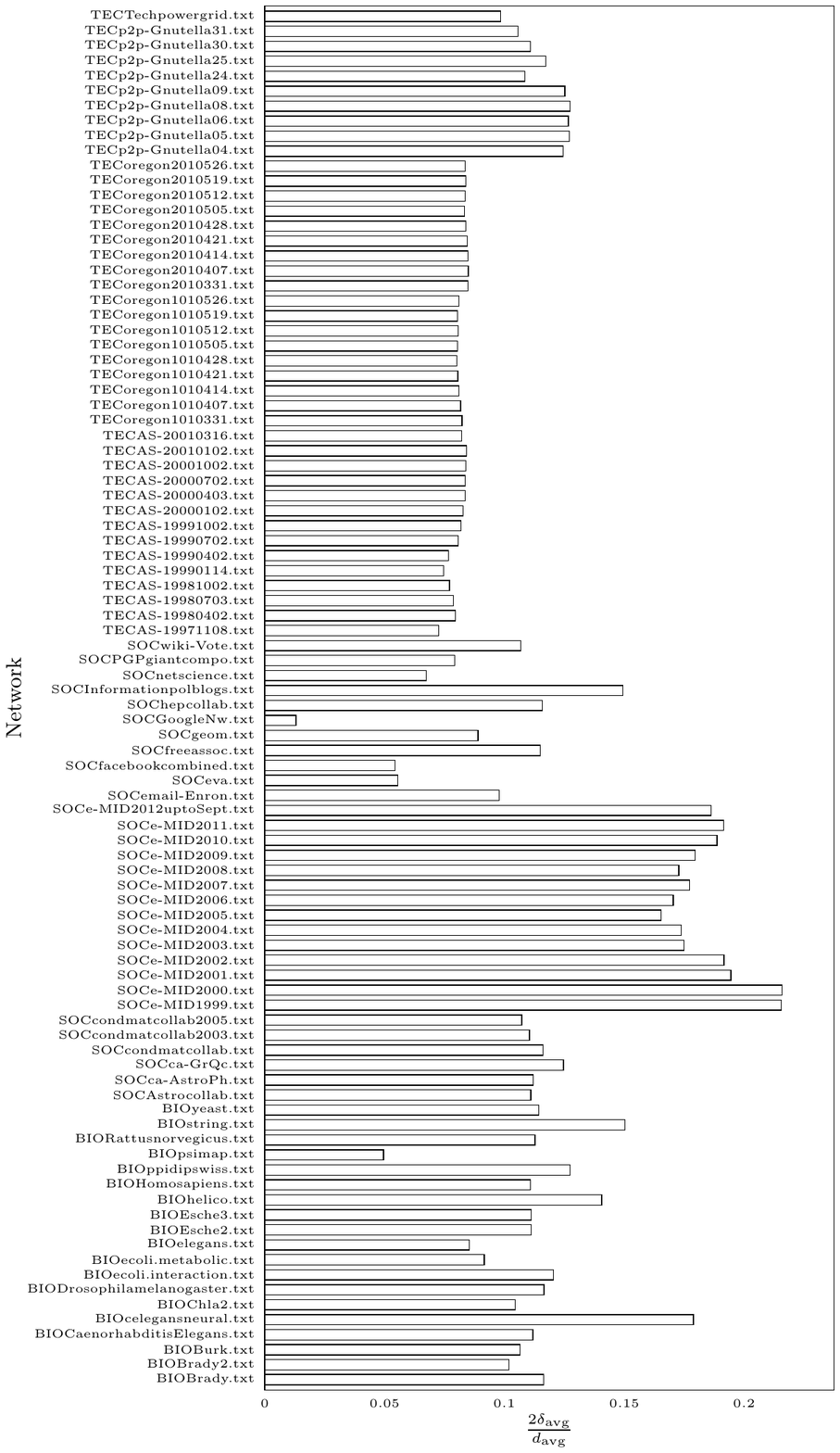}
\caption{The value $\frac{2\da}{\dista}$ of all the networks in our dataset.\label{fig:deltaavg}}
\end{figure*}
The picture shows that the average hyperbolicity is usually an order of magnitude smaller than the average distance: in this sense, real-world networks are indeed hyperbolic. Moreover, it is possible to make a distinction between networks that are ``more democratic'', like the e-MID networks or the peer-to-peer networks (where the hyperbolicity is large), and networks that are ``more centralized'', like some social networks and most autonomous systems networks.







\subsection{Hyperbolicity of Neighbors.}
Since the hyperbolicity of a graph is closely related to the existence of a small part of the graph controlling most shortest paths, we have analyzed which subgraphs of a given graph have small hyperbolicity. Intuitively, these subgraphs should be ``less democratic'' than the whole graph, in the sense that they are contained in the ``influence area'' of a small group of vertices. In this analysis, 
we have tried to spot the influence area of a single vertex, by measuring $\frac{2\da}{\dista}$ on neighbors of $v$ in increasing order of size.
In order to prove the effectiveness of this approach, we have first tested a synthetic power-law graph \cite{lancichinetti2009benchmarks} made by three communities of $1000$ vertices each (see the lowest plot in Figure \ref{fig:deltaneighsinglegraphs}). We have computed the hyperbolicity of neighbors of the vertex $v$ with highest degree: we can see a local minimum close to the size of a community. In our opinion, this minimum appears because the neighbor is ``dominated by the community'', and consequently by the center $v$ of the community. This result confirms the link between the value of $\frac{2\da}{\dista}$ and the influence area of a vertex.

Finally, we passed to the analysis of neighbors in real-world networks. The upper plots in Figure \ref{fig:deltaneighsinglegraphs} show the same results for one network of each kind: 
\begin{itemize}
\item a social network, the General Relativity and Quantum Cosmology collaboration network;
\item a biological network, the yeast metabolic network;
\item a technological network, the peer-to-peer Gnutella network in 2004.
\end{itemize}
As a benchmark of comparison, we have also considered the hyperbolicity of neighbors of a random vertex.

\begin{figure*}
\includegraphics[width=\textwidth]{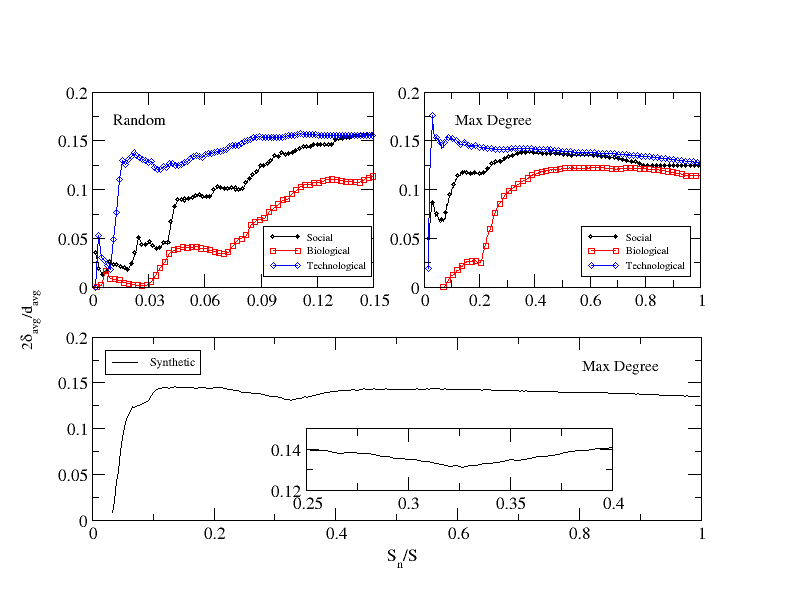}
\caption{The value of $\frac{2\da}{\dista}$ for neighbours of a randomly chosen vertex (up left), or the maximum degree vertex (up right); $S_n$ is the number of vertices in the neighbours, while $S$ is the total number of vertices. Results are shown for a social network, a biological network, a technological network, and (below) a synthetic network.\label{fig:deltaneighsinglegraphs}} 
\end{figure*}

The plots show that the hyperbolicity of a neighbor of the maximum degree vertex grows almost linearly with the neighbor size, until it converges to the hyperbolicity of the whole graph. Convergence time differs from graph to graph. In biological networks, convergence was reached at size close to $\frac{n}{2}$, while in the social and in the technological networks convergence is reached before. For neighbors of a random vertex, we outline a different behavior: at the beginning, the growth is not monotone, like in the previous case, and it is much more irregular. In our opinion, this is due to the fact that, when the neighbor grows, it reaches more and more ``influential'' vertices, and the first neighbor that touches such a vertex corresponds to a local maximum in the plot. After some steps, the hyperbolicity grows more and more regularly, because we have reached a very influential vertex $w$, and from that point on we are mainly considering the influence area of $w$, not of $v$. This issue is further confirmed by Figure \ref{fig:derivative}, where the derivative of the average hyperbolicity is shown. Hence, this experiment outlines two significant point: on the one hand, it allows us to define the influence area of $v$ as the neighbor where the first local maximum is reached; on the other hand, it gives motivations to the analysis of neighbors of influential vertices, because their influence area has a strong impact on the topology of the whole network.

\begin{figure}
\includegraphics[width=.48\textwidth]{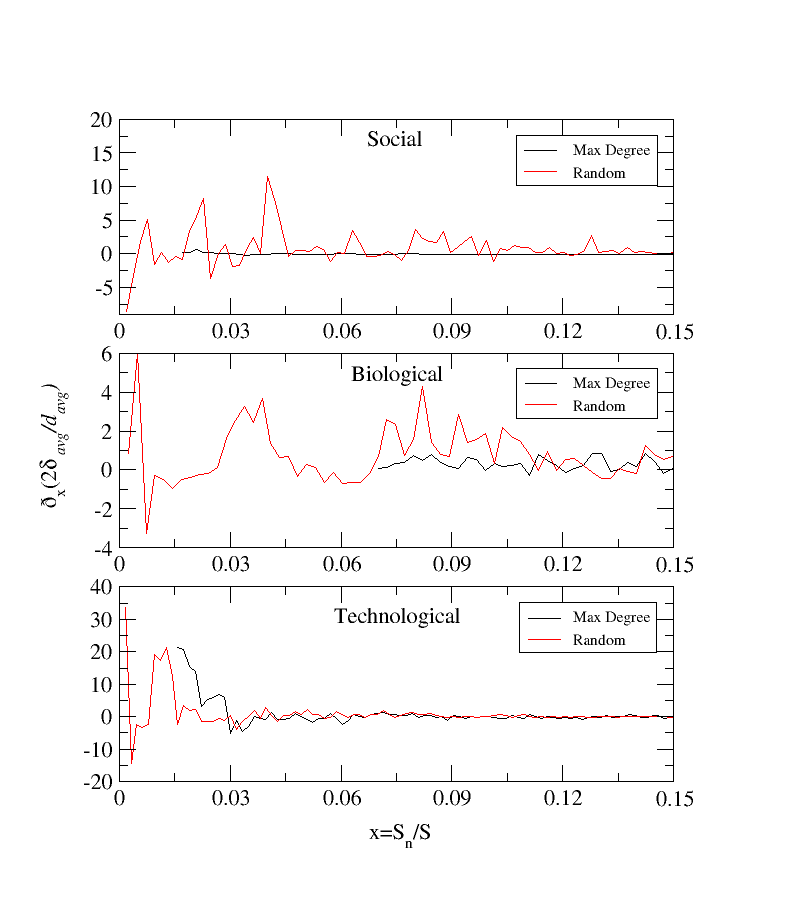}
\caption{The derivative with respect to the neighbors proportion $S_n/S$ of the value of $\frac{2\da}{\dista}$, in neighbors of the maximum degree vertex and of random vertices.
\label{fig:derivative}} 
\end{figure}


For this reason, we have focused on the maximum degree vertex, and, in order to have more general results, we have analyzed all graphs in the dataset. Figure \ref{fig:firstneigh} shows the size of the maximum neighbor having ratio $\frac{2\da}{\dista}$ at least half than the same ratio in the whole graph. Actually, in order to exclude random deviations from our analysis, we have plotted the fourth neighbor where the condition is satisfied.
The figure shows that the influence area of an individual is low in social and peer-to-peer networks, compared to biological or autonomous system network.



This standard behavior has few exceptions: first of all, protein-protein interaction networks (\texttt{string}, \texttt{ecoli.interaction}) are different from other biological networks, and the influence area is smaller. Furthermore, the social network \texttt{GoogleNW} contains a vertex with an enormous influence area: this network is the set of Google pages, and the central vertex $v$ considered is the page \texttt{www.google.com}, which clearly dominates all the others. Another particular case is the social network \texttt{facebook\_combined}: this network is a collection of ego-networks from Facebook, and links are made if common interests are retrieved. We think that this network is different from the others because it is a small subgraph of a bigger graph (where all Facebook users are considered), and the choice of the subgraph has a strong impact on the topology of the network, which does not reflect the standard behavior.

\begin{figure*}[h!]
\includegraphics[width=\textwidth]{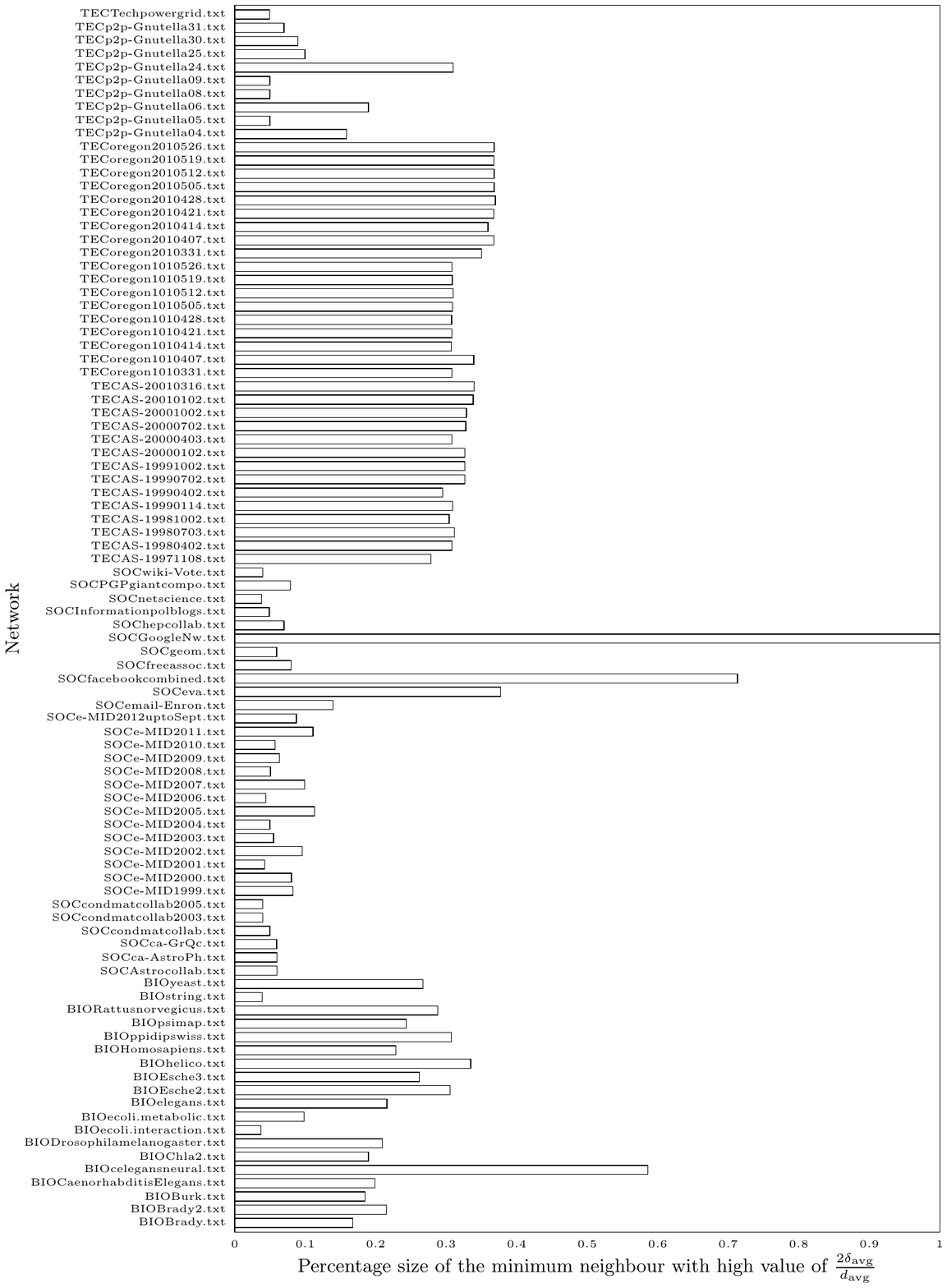}
\caption{The size of the fourth neighbour of the maximum degree vertex having $\frac{2\da}{\dista}$ at least half than the same value in the whole graph.\label{fig:firstneigh}}
\end{figure*}




\section{Discussion} \label{sec:IV}

In the literature, several works have analyzed the hyperbolicity of a complex network. They used this quantity in order to classify real-world networks and in order to draw conclusions about the impact of hyperbolicity on the network topology.
However, these works are mainly based on the analysis of $\dw$, which has two drawbacks: it is not \emph{robust}, that is, small modifications on the network can have deep impacts on its value, and it is not \emph{scalable}, that is, it can be exactly computed only on small networks.
In this work, we confirmed these conclusions, and we have proposed a different approach: using $\da$ instead of $\dw$, a parameter already considered in the literature. We interpreted this parameter as a measure of ``democracy'' in a network, and we classified different networks according to how democratic they are.

In particular, we have shown that technological autonomous system networks are less ``democratic'' than social or biological networks, in agreement with our intuition (since AS graphs have a ``built-in'' hierarchy, while in social networks everyone has the same role). Moreover, we have applied this concept to neighbors of influential nodes: we have seen that larger neighbors are more democratic, and the relationship between the size and the hyperbolicity is almost linear, until a threshold is reached. This analysis clearly outlines the influence area of a node, whose size strongly depends on the graph considered. We have shown that nodes have a rather small influence area in social and peer-to-peer networks, while in autonomous systems and biological networks the influence area can be close to half the graph. A possible explanation of this behavior is that the former networks are ``distributed'', in the sense that each node has a goal (downloading in peer-to-peer networks, and creating relationships in social networks), and edges are created locally by nodes that try to reach the goal. On the other hand, the latter networks have global goals (connecting everyone in the network, or making a cell live), and the creation of edges is ``centralized''. Our analysis is able to distinguish graphs of these two kinds.

These results prove that democracy in a complex network is well formalized by our definition of hyperbolicity, since the consequences of our interpretation are coherent with intuitive ideas.
Furthermore, we have provided an application of this interpretation, making it possible to define and analyze the influence area of a node. 
As far as we know, this is the first work that provides this interpretation of the average hyperbolicity of a graph.

Possible applications include not only the classification of networks according to this parameter, but also the classifications of nodes in a network, or the classification of different communities. These communities might be democratic, if everyone has ``the same role'' and the hyperbolicity is high, or not democratic, if there is a group of few nodes that keeps the community together, making the average hyperbolicity small.
Future researches may address this characterization, and they may also classify nodes in a network according to their influence area, in order to distinguish important hubs from peripheral node. Finally, it would be interesting to compare this approach with other approaches to determine the centrality of a node (closeness centrality, betweenness centrality, eigenvector centrality, and so on), in order to better outline the features of this analysis.

\bibliographystyle{abbrv}
\bibliography{library}

%


%
%
%
%
%
\end{document}